\newcommand{\ave}[1]{\left \langle #1 \right \rangle}
\newcommand{\opa}{\hat{a}}
\newcommand{\opad}{\hat{a}^\dagger}
\newcommand{\opb}{\hat{b}}
\newcommand{\opbd}{\hat{b}^\dagger}
\newcommand{\oph}{\hat{H}}
\newcommand{\opv}{\hat{v}}
\newcommand{\opvd}{\hat{v}^\dagger}
\newtheorem{theorem}{Theorem}
\begin{document}

\title{Quantum communication over bandwidth-and-time-limited channels}

\author{Aditya Gandotra}
\thanks{These two authors contributed equally}
\email{adityag@uchicago.edu}
\author{Zhaoyou Wang}
\thanks{These two authors contributed equally}
\email{zhaoyou@uchicago.edu}
\author{Aashish A. Clerk}
\email{aaclerk@uchicago.edu}
\author{Liang Jiang}
\email{liangjiang@uchicago.edu}

\affiliation{Pritzker School of Molecular Engineering, University of Chicago, Chicago, Illinois 60637, USA}

\date{\today}

\begin{abstract}
    Standard communication systems have transmission spectra that characterize their ability to perform frequency multiplexing over a finite bandwidth.
    Realistic quantum signals in quantum communication systems like transducers are inherently limited in time due to intrinsic decoherence and finite latency, which hinders the direct implementation of frequency-multiplexed encoding.
    We investigate quantum channel capacities for bandwidth-and-time-limited (BTL) channels to establish the optimal communication strategy in a realistic setting. 
    For pure-loss bosonic channels, we derive analytical solutions of the optimal encoding and decoding modes for Lorentzian and box transmission spectra, along with numerical solutions for various other transmissions. 
    Our findings reveal a general feature of sequential activation of quantum channels as the input signal duration increases, as well as the existence of optimal signal length for scenarios where only a limited number of channels are in use.
\end{abstract}

\maketitle

\emph{Introduction}\textemdash
Sending quantum states coherently from one place to another is essential in quantum networks~\cite{kimble2008}.
Matter qubits that store the quantum information are first encoded into photons, the ideal information carrier for long-distance communication.
These photons are then transmitted through a communication channel and measured at the receiving end to recover the original qubits (Fig.~\ref{fig:fig0}).
Quantum capacity, the maximum number of qubits that can be reliably transmitted per channel use, serves as the key performance metric for quantum communication channels.

Most communication systems, such as optical fibers~\cite{cirac_quantum_1997,ritter2012}, microwave links~\cite{wenner2014,campagne-ibarcq_deterministic_2018,magnard2020,zhong2021,storz2023}, and quantum transducers~\cite{han_microwave-optical_2021,lambert2020a,jiang2020,mckenna2020,mirhosseini2020,xu2021,shen2022,tu2022,meesala2024,zhang2018,zhong2020a,wu2021,wang2023,shi2024}, allow simultaneous transmission of photons at different frequencies over a finite bandwidth.
As a result, there are infinite number of modes available for encoding and transmitting qubits.
Frequency multiplexing, for example, encodes quantum information simultaneously in multiple input modes $\opa_{\text{in}}(\omega)$ at different frequencies.
More generally, we can define an input mode as a wave packet
\begin{equation}
    \label{eq:mode_def}
    \hat{a} = \int_{-\infty}^{\infty} f(\omega) \hat{a}_{\text{in}}(\omega) \text{d}\omega ,
\end{equation}
where the mode profile $f(\omega)$ can be controlled experimentally~\cite{pechal2014,wenner2014,campagne-ibarcq_deterministic_2018}.
Quantum information can be encoded simultaneously in multiple input modes $\{\opa_k\}$ with mutually orthogonal mode profiles $\{f_k(\omega)\}$ to achieve higher information rates (Fig.~\ref{fig:fig0}).

Frequency multiplexing is no longer viable in realistic scenarios where quantum signals have a have time-duration $T$ that is is not infinitely larger than relevant time scales of the communication channel.
Constraints on maximum signal time arise from several factors, including internal losses in the matter qubits carrying the signals, fluctuations of the communication channel over long time scales, and additional decoherence from heating effects (e.g., strong optical pumps in optomechanical quantum transducers) which can even disable the quantum channel~\cite{meenehan2015,urmey_stable_2024,mayor_two-dimensional_2024}.
We define bosonic channels with finite bandwidths and signal durations as the bandwidth-and-time-limited (BTL) channels.
It is worth noting that quantum capacity is compatible with BTL channels: the capacity is defined in the asymptotic limit of many channel uses, with each input signal being time-limited. However, quantum capacity of BTL channels remains unexplored.

Even when there are no external constrains requiring finite duration signals, time-limited signals can offer practical advantages. For instance, a finite signal duration reduces latency in quantum networks. In near-term quantum state transfer experiments, where only one or a few input modes can be controlled to encode and transmit qubits, there must be some optimal time duration for the input modes. If the duration is too short, the signal bandwidth exceeds the channel bandwidth, resulting in signal reflection. 
Conversely, for very long signals, quantum capacity using only a few input modes saturates, reducing the information rate per unit time.
Indeed, a key result of our work is is to determine the optimal finite-time encoding when using a small number of input modes. 

\begin{figure}
    \centering
    \includegraphics[width = 0.5 \textwidth]{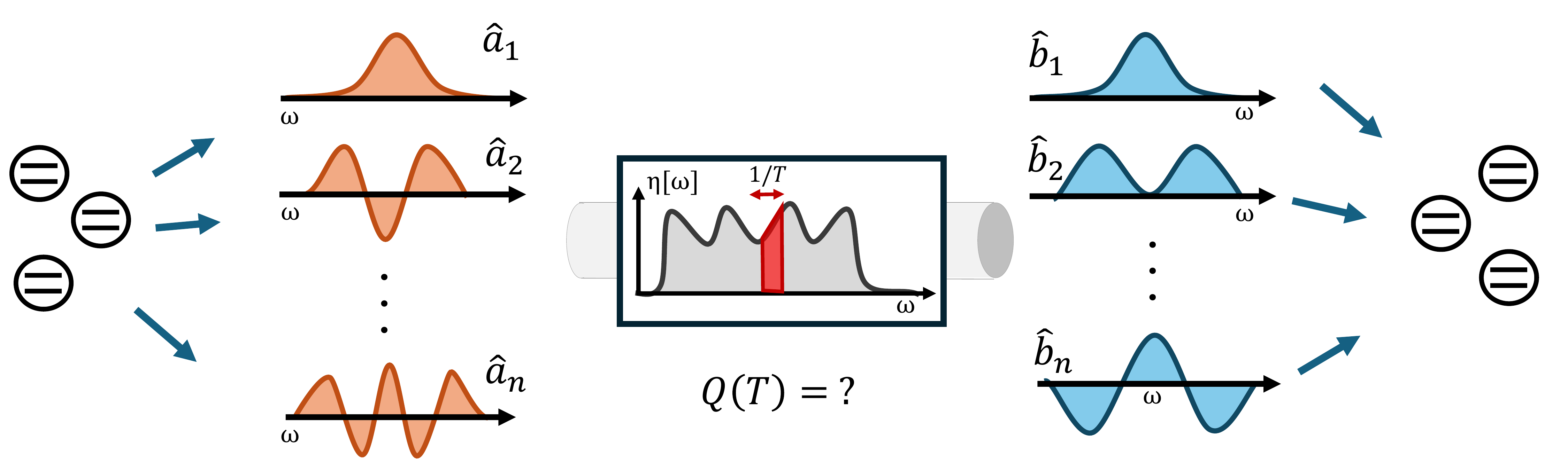}
    \caption{Schematic of sending qubits through a channel with finite bandwidth. We consider the scenario where qubits are encoded into a set of orthogonal photon modes with finite time duration $T$.}
    \label{fig:fig0}
\end{figure}

Determining the quantum capacity of BTL channels with pure loss requires finding the optimal orthogonal input modes and output readout, a nontrivial task due to the nonorthogonality of the output mode profiles.
In frequency multiplexing, the mode profiles are delta functions in the frequency domain which remain orthogonal under any transmission (Fig.~\ref{fig:schematic}(b)).
However, a finite timescale imposes a minimum linewidth on the mode profiles $\{f_k(\omega)\}$, resulting in deformed output mode profiles $\{g_k(\omega)\}$ that are no longer mutually orthogonal (Fig.~\ref{fig:schematic}(c)) due to the frequency-dependent transmission.
While an analogous problem has been studied for restricted spatial optical modes in classical communications~\cite{miller2000communicating,miller2019waves,lupo2012}, the optimal encoding and readout strategies in time-limited scenarios as well as the resulting quantum capacity remain unknown.

In this paper, we study quantum communication in BTL pure-loss channels.
The optimal strategy is to find proper input modes such that the corresponding output mode profiles are mutually orthogonal, which decomposes the multi-mode channel over finite bandwidth into parallel single-mode channels.
We prove that the total quantum capacity of a generic multi-mode pure-loss channel is precisely the sum of quantum capacity of each single-mode channel.
Then, we derive the (total) quantum capacity for a BTL pure-loss channel, which demonstrates step-wise increase of one-way quantum capacity~\cite{wilde2013quantum} as we increase the time duration, associated with individual channels opening up.
Our work also determines the optimal signal duration and input modes for near-term experiments with limited number of controllable channels.
 
\begin{figure}
    \centering
    \includegraphics[width = 0.5 \textwidth]{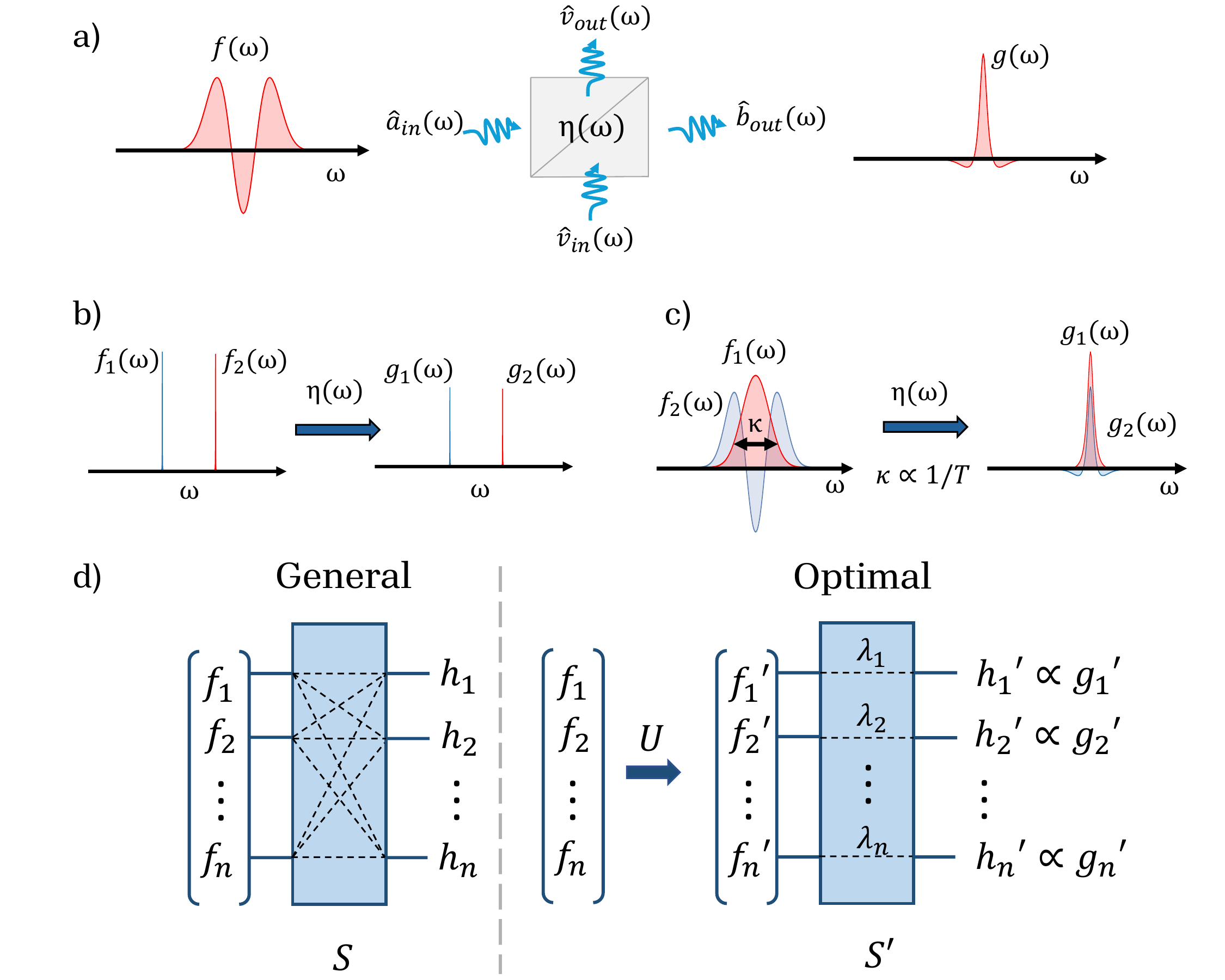}
    \caption{(a) Transmission of a wave packet through a linear scattering channel with transmission spectrum $\eta(\omega)$, transforming input mode profile \(f(\omega)\) to output profile \(g(\omega)\). (b) Infinite-duration signals with delta function mode profiles in the frequency domain transform to delta functions under any transmission. (c) Time-limited orthogonal input modes with minimum linewidth \(\kappa\) do not generally remain orthogonal after transmission through the channel. (d) Schematic of general input and readout modes and the optimal setting in Theorem~\ref{theorem}.}
    \label{fig:schematic}
\end{figure}

\emph{Optimized communication with one input mode}\textemdash
We consider pure-loss channels with frequency-dependent transmission coefficients.
At frequency $\omega$, an input quantum signal $\opa_{\text{in}}(\omega)$ gets converted into an output quantum signal $\opb_{\text{out}}(\omega)$ with the input-output relation (Fig.~\ref{fig:schematic}(a))
\begin{equation}
    \label{eq:channel_IO}
    \opb_{\text{out}}(\omega) = \tau(\omega) \opa_{\text{in}}(\omega) + r(\omega) \opv_{\text{in}}(\omega) ,
\end{equation}
where $\opv_{\text{in}}(\omega)$ is the vacuum input from the environment that adds extra vacuum noise to the output.
The transmission and reflection coefficients $\tau(\omega)$ and $r(\omega)$ satisfy $|\tau(\omega)|^2 + |r(\omega)|^2 = 1$, and the frequency-dependent transmissivity is $\eta(\omega) = |\tau(\omega)|^2$.
The field operators satisfy the usual bosonic commutation relations $[\opa_{\text{in}}(\omega),\opad_{\text{in}}(\omega')] = [\opb_{\text{out}}(\omega),\opbd_{\text{out}}(\omega')] = [\opv_{\text{in}}(\omega),\opvd_{\text{in}}(\omega')] = \delta (\omega -\omega')$.
We assume vacuum states for the environment modes $\opv_{\text{in}}(\omega)$ throughout the paper.

To incorporate constraints on input signals, it is essential to understand the transmission of signal modes with general mode profiles beyond single-frequency encodings.
Consider first the simple case of a single input mode $\opa$ with mode profile $f(\omega)$ (Eq.~(\ref{eq:mode_def})), which is normalized as $\int_{-\infty}^{\infty} |f(\omega)|^2 d\omega=1$.
After the transmission, the unnormalized output profile is $g(\omega) = \tau^*(\omega) f(\omega)$.
For a readout mode $\opb = \int h(\omega) \opb_{\text{out}}(\omega) d\omega$ with a normalized profile $h(\omega)$, we have
\begin{equation}
    \label{eq:one_mode_IO}
    \opb = \ave{g,h} \opa + \sqrt{1-|\ave{g,h}|^2} \opa_\perp ,
\end{equation}
where $\opa_\perp$ is a vacuum mode orthogonal to $\opa$ and $\ave{g,h} \equiv \int_{-\infty}^{\infty} g^*(\omega) h(\omega) \text{d}\omega$
is the inner product between functions.

Since $\opa_\perp$ is the vacuum input from the environment, the quantum channel $\mathcal{E}:\opa \rightarrow \opb$ in Eq.~(\ref{eq:one_mode_IO}) is a bosonic pure-loss channel~\cite{weedbrook_gaussian_2012} with transmissivity $\eta = |\ave{g,h}|^2$.
The quantum capacity of a bosonic pure-loss channel with transmissivity $\eta \in [0,1]$ is
\begin{equation}
    \label{eq:pure_loss_capacity}
    q(\eta) = \max{\left\{0, \log_2\left(\frac{\eta}{1-\eta}\right) \right\}} ,
\end{equation}
which is the maximal number of qubits that can be reliably transmitted per channel use.
Since quantum capacity increases monotonically with $\eta$, the optimal output mode $\opb$ for reading out $\opa$ should maximize $\eta=|\ave{g,h}|^2$. From the Cauchy-Schwarz inequality, we have
\begin{equation}
    \eta = |\ave{g,h}|^2 \leq \ave{g,g} \cdot \ave{h,h} = \ave{g,g} .
\end{equation}
Therefore the optimal output mode $\opb$ is given by $h(\omega) = g(\omega)/\sqrt{\ave{g,g}}$ which leads to the maximal $\eta = \ave{g,g}$. 

\emph{Multi-mode pure-loss channels}\textemdash
More generally, we can encode quantum signals in multiple input modes simultaneously and measure multiple output modes to receive the quantum information.
For input modes $\bm{a} \equiv (\opa_1,\ldots,\opa_n)^T$ with profiles $\{f_1,\ldots,f_n\}$, we can measure output modes $\bm{b} \equiv (\opb_1,\ldots,\opb_n)^T$ with profiles $\{h_1,\ldots,h_n\}$, where $\ave{f_i,f_j} = \ave{h_i,h_j} = \delta_{ij}$. Then we can define a general scattering matrix \(S\) by $S_{lk} = \ave{g_k, h_l}$ with $g_k(\omega) = \tau^*(\omega) f_k(\omega)$ (Fig.~\ref{fig:schematic}(d)).
This leads to a multi-mode quantum channel
\begin{equation}
    \label{eq:multi_mode_channel}
    \mathcal{E}: \bm{a} \rightarrow \bm{b} = S \bm{a} + \cdots \bm{a}_\perp ,
\end{equation}
where $\bm{a}_\perp$ are vacuum modes orthogonal to $\bm{a}$.

\begin{theorem}
    \label{theorem_capacity}
    The quantum capacity $Q(\mathcal{E})$ of a generic multi-mode pure-loss channel $\mathcal{E}$ (Eq.~(\ref{eq:multi_mode_channel})) is $\sum_{k=1}^n q (\sigma_k^2)$, where $\{ \sigma_1,...,\sigma_n \}$ are the singular values of $S$.
\end{theorem}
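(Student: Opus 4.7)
The plan is to diagonalize the scattering matrix $S$ via its singular value decomposition (SVD), converting the multi-mode channel into a tensor product of independent single-mode pure-loss channels, and then invoke additivity of the quantum capacity across this product.

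First, I would write $S = U\Sigma V^\dagger$ with $\Sigma = \mathrm{diag}(\sigma_1,\ldots,\sigma_n)$ and $U,V$ unitary. Defining new orthonormal input modes $\bm{a}' = V^\dagger \bm{a}$ (equivalently, new profiles $f_k'(\omega) = \sum_j (V^\dagger)_{kj} f_j(\omega)$) and new orthonormal readout modes $\bm{b}' = U^\dagger \bm{b}$, the channel becomes $\bm{b}' = \Sigma \bm{a}' + U^\dagger \bm{n}$, where $\bm{n}$ collects the original environmental noise operators. Since $U$ and $V$ are fixed unitaries, this basis change is a deterministic, invertible relabeling of Alice's encoding basis and Bob's measurement basis, and therefore leaves the quantum capacity invariant.

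Next, I would verify that in the new basis the channel truly factorizes as $\mathcal{E} = \bigotimes_{k=1}^n \mathcal{E}_k$ with $\hat{b}_k' = \sigma_k \hat{a}_k' + \sqrt{1-\sigma_k^2}\,\hat{c}_k$. The only nontrivial check is that the $\hat{c}_k$ are mutually orthogonal vacuum bosonic modes, which follows from computing the commutator matrix of the rotated noise: $[U^\dagger \bm{n}, (U^\dagger \bm{n})^\dagger] = U^\dagger (I - S S^\dagger) U = I - \Sigma^2$, which is diagonal with entries $1-\sigma_k^2$. Together with the vacuum state of the environment and the trivial commutation of input and noise modes, this establishes that each factor $\mathcal{E}_k$ is an independent single-mode pure-loss channel of transmissivity $\sigma_k^2$, with quantum capacity $q(\sigma_k^2)$ by Eq.~(\ref{eq:pure_loss_capacity}).

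The lower bound $Q(\mathcal{E}) \geq \sum_k q(\sigma_k^2)$ is then immediate by using the factor channels independently. The main obstacle is the matching upper bound, since quantum capacity is not generally additive across tensor products. I would close this gap by exploiting the special structure of bosonic pure-loss channels: each factor $\mathcal{E}_k$ with $\sigma_k^2 > 1/2$ is degradable and hence has single-letter additive coherent information, while each $\mathcal{E}_k$ with $\sigma_k^2 \leq 1/2$ is antidegradable and may be dropped from the tensor product without changing the capacity. This combined structure forces the upper bound $Q(\mathcal{E}) \leq \sum_k q(\sigma_k^2)$, completing the proof.
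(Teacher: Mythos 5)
Your overall architecture matches the paper's: diagonalize $S$ by SVD to factor $\mathcal{E}$ into independent single-mode pure-loss channels $\mathcal{E}_k$ with transmissivities $\sigma_k^2$ (your explicit check that the rotated noise operators form independent vacuum modes, via $I-SS^\dagger$, is a nice verification the paper leaves implicit), and obtain the lower bound by using the factors separately. The difference, and the weak point, is your upper bound. You correctly identify that additivity is the main obstacle, but you then resolve it by asserting that each antidegradable factor ($\sigma_k^2\leq 1/2$) ``may be dropped from the tensor product without changing the capacity.'' That claim is true for a product of degradable and antidegradable channels, but it is not free: superactivation (Smith--Yard) shows that an antidegradable channel tensored with another zero-capacity channel \emph{can} have positive capacity, so zero capacity of a factor does not by itself license discarding it. As written, the hardest step of the theorem is asserted rather than proved, and you would need to either prove or cite the degradable-$\otimes$-antidegradable additivity result to close the argument.

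The paper takes a cleaner route around exactly this point: by the data-processing inequality, each factor with $\sigma_k^2<1/2$ is a degraded version of a pure-loss channel with transmissivity $1/2$, so $Q(\mathcal{E})\leq Q(\bar{\mathcal{E}})$ where every factor of $\bar{\mathcal{E}}$ has transmissivity $\max\{\sigma_k^2,1/2\}$. The channel $\bar{\mathcal{E}}$ is then a tensor product of degradable channels, hence degradable, and its capacity is the additive single-letter coherent information $\sum_k q(\max\{\sigma_k^2,1/2\})=\sum_k q(\sigma_k^2)$. This sidesteps any statement about mixed degradable/antidegradable products. Your approach can be made rigorous, but you should either adopt the paper's degradation trick or supply the missing lemma explicitly.
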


\begin{proof}
    We can rewrite $\mathcal{E}$ in an equivalent diagonal form with $\bm{a}' = U\bm{a}$ and $\bm{b}' = V\bm{b}$, where the unitary $U,V$ come from the singular value decomposition $S=V^\dagger \Sigma U$. This gives the decomposition $\mathcal{E} = \mathcal{E}_1 \otimes \cdots \otimes \mathcal{E}_n$, where $\mathcal{E}_k: \opa_k' \rightarrow \opb_k'$ is a bosonic pure-loss channel with transmissivity $\sigma_k^2$.
    In general, the quantum capacity of $\mathcal{E}$ is intractable since $\mathcal{E}$ is neither degradable nor anti-degradable~\cite{wolf2007,caruso2008} with some $\sigma_k^2$ greater than 0.5 and others less than 0.5. However, for pure-loss channels, we have an upper bound $Q(\mathcal{E}) \leq Q(\bar{\mathcal{E}})$ from the data processing inequality, where $\bar{\mathcal{E}} = \bar{\mathcal{E}}_1 \otimes \cdots \otimes \bar{\mathcal{E}}_n$ and $\bar{\mathcal{E}}_k$ is a pure-loss channel with transmissivity $\max\{ \sigma_k^2, 0.5 \}$. Since $\bar{\mathcal{E}}$ is degradable, we have $Q(\bar{\mathcal{E}})=\sum_{k=1}^n q (\sigma_k^2)$~\cite{wolf2007}. On the other hand, we also have $\sum_{k=1}^n q (\sigma_k^2) \leq Q(\mathcal{E})$ since we can use the quantum channels $\{\mathcal{E}_k\}$ independently with separable input states to $\mathcal{E}$. This completes the proof.
\end{proof}

Note that for a given set of input modes, the transmissivities $\{\sigma_k^2\}$ depend on the choice of the readout modes. The optimal setting (Fig.~\ref{fig:schematic}(d)) that achieves the maximal transmissivity for every channel $\mathcal{E}_k$ is (see Appendix~\ref{appendix:proof}):
\begin{theorem}
    \label{theorem}
    Given a particular basis $\{f_1,\ldots,f_n\}$ of input modes $\bm{a}=(\opa_1,\ldots,\opa_n)^T$, we can diagonalize the Hermitian matrix $M=U^T D U^*$ with unitary $U$ where $M_{ij} = \ave{g_i,g_j}$. The optimal input modes are $\bm{a}' = U\bm{a}$ with profiles $\{f_1',\ldots,f_n'\}$ whose output profiles $\{ g_1',...,g_n'\}$ are mutually orthogonal.
    The optimal readout modes are $\bm{b}' = (\opb_1',\ldots,\opb_n')^T$ with $h_k' = g_k'/\sqrt{\ave{g_k',g_k'}}$, and the single-mode channels $\{ \mathcal{E}_k:\opa_k' \rightarrow \opb_k' \}$ are bosonic pure-loss channels whose transmissivities $\{\lambda_k=\ave{g_k',g_k'}\}$ are the eigenvalues of $M$.
\end{theorem}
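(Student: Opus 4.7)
The plan is to realize the diagonalization of the multi-mode channel directly through the Gram matrix $M$ of the output profiles, which depends only on the input basis. Since $M$ is Hermitian and positive semi-definite, it admits a spectral decomposition, and by writing it in the slightly unusual form $M = U^T D U^*$ we can read off a unitary $U$ such that, when applied to the input modes, it makes the transformed output profiles mutually orthogonal. Normalizing these orthogonal output profiles then supplies the optimal readout, and the eigenvalues $\{\lambda_k\}$ of $M$ will turn out to coincide with the squared singular values of the scattering matrix $S$, which is what makes this construction optimal.

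The first step I would carry out is the basis change on the input side. Under $\bm{a}' = U\bm{a}$, the mode profiles transform linearly as $f_k'(\omega) = \sum_j U_{kj} f_j(\omega)$, and the (unnormalized) output profiles transform the same way, $g_k'(\omega) = \sum_j U_{kj} g_j(\omega)$, because the channel action $g = \tau^* f$ is pointwise in frequency. A direct computation then gives
\begin{equation}
    \ave{g_i', g_j'} = (U^* M U^T)_{ij}.
\end{equation}
Since $U$ is unitary, $U^\dagger = U^{-1}$ implies $U^* U^T = I$, and so $U^* M U^T = (U^* U^T) D (U^* U^T) = D$. This shows the new output profiles are mutually orthogonal with squared norms $\|g_k'\|^2 = \lambda_k$, exactly as claimed.

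With orthogonality in hand, I would then set $h_k' = g_k'/\sqrt{\lambda_k}$ whenever $\lambda_k > 0$ (and pick any unit vector orthogonal to the others if $\lambda_k = 0$), so that $\{h_k'\}$ is orthonormal. The scattering matrix in the new bases becomes $S'_{lk} = \ave{g_k', h_l'} = \sqrt{\lambda_k}\,\delta_{lk}$, and each physical map $\opa_k' \to \opb_k'$ reduces to the single-mode setting of Eq.~(\ref{eq:one_mode_IO}) with transmissivity $\lambda_k$, i.e., a bosonic pure-loss channel. Optimality follows from the observation that for any orthonormal readout basis spanning the output subspace one has $M = S^T S^*$, whose eigenvalues are the squared singular values of $S$; our construction realizes these singular values directly, without requiring any post-channel unitary processing of the kind used in the SVD step of Theorem~\ref{theorem_capacity}.

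I expect the main obstacle to be purely notational rather than conceptual: one must keep careful track of the complex conjugation conventions that distinguish how operators $\opa_k$ versus profiles $f_k$ transform under $U$, which is precisely what forces the diagonalization form $M = U^T D U^*$ rather than the more familiar $V D V^\dagger$. The key identity is $U^* U^T = I$, which is immediate from unitarity but is easy to misapply if the placement of conjugates and transposes is not tracked carefully; once it is, the remaining algebra is a routine verification.
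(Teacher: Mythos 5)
Your construction and verification are correct: the identity $\ave{g_i',g_j'}=(U^*MU^T)_{ij}=D_{ij}$, the resulting diagonal scattering matrix $S'_{lk}=\sqrt{\lambda_k}\,\delta_{lk}$, and the identification $M=S^TS^*$ for an orthonormal readout basis spanning $\text{span}\{g_k\}$ are all right, and the last of these is a cleaner way to see the equivalence with the SVD of $S$ than the paper's phrasing of its Case~1. However, your optimality argument has a genuine gap: you only compare against readout bases that \emph{span the output subspace}. The quantifier in the theorem is over all admissible readouts, including orthonormal sets $\{h_l\}$ whose span does not contain $\text{span}\{g_k\}$ (e.g., readout modes partially or wholly orthogonal to the transmitted field). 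For such a choice the identity $M=S^TS^*$ fails --- one instead gets $(S^\dagger S)_{km}=\ave{g_m,P_H\,g_k}$ with $P_H$ the projector onto $\text{span}\{h_l\}$ --- and nothing in your argument rules out, a priori, that some such ``leaky'' readout could yield larger singular values. This is exactly the case the paper spends half its proof on (its Case~2), where it splits $S=(A,0)$ versus $\tilde S=(A,B)$ and invokes Cauchy's interlacing theorem to show the completed readout dominates.

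The gap is fixable with one additional line: since $P_H\leq \mathds{1}$, the matrix with entries $\ave{g_k,P_H\,g_m}$ is dominated by $M$ in the positive-semidefinite order, so by Weyl's monotonicity the ordered eigenvalues of $S^\dagger S$ are bounded above by those of $M$ for \emph{every} readout choice, with equality precisely when $P_H$ acts as the identity on $\text{span}\{g_k\}$. Combined with your observation that the proposed construction attains the eigenvalues of $M$ exactly, this closes the argument (and is arguably tidier than the paper's interlacing route). As written, though, the claim ``optimality follows from the observation that for any orthonormal readout basis spanning the output subspace one has $M=S^TS^*$'' establishes only that all spanning readouts are equally good, not that non-spanning readouts are worse.
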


From now on, we only focus on the optimal setting and use $\opa_k,f_k$ to denote the optimal modes and profiles for notational simplicity, rather than $\opa_k',f_k'$.

\emph{Quantum capacity of BTL pure-loss channels}\textemdash
We can apply Theorem~\ref{theorem} to find the optimal input modes under the constraint that the mode profiles only have a finite support over $t\in [-T/2, T/2]$ in the time domain.
Choosing delta functions in the time domain as the basis functions, the Hermitian matrix $M$ in Theorem~\ref{theorem} becomes a linear operator $M(t,t') = \tilde{\eta}(t-t')$, where $t,t' \in [-T/2, T/2]$ and $\tilde{\eta}(t) = \frac{1}{2\pi}\int d\omega \  e^{i\omega t} \eta(\omega)$.
The optimal input mode profiles $\{f_k\}$ in the time domain are eigenfunctions of $M$ satisfying (see Appendix~\ref{appendix:derivation})
\begin{equation}
\label{eigProblem}
    \int_{-T/2}^{T/2} dt'\ \tilde{\eta}(t-t') f_k(t') = \lambda_k f_k(t) ,
\end{equation}
which can be solved numerically by discretizing \(\tilde{\eta}(t)\).

The eigenvalues \(\{\lambda_k\}\) in Eq.~(\ref{eigProblem}) are the transmissivities of the optimal single-mode channels for time-limited quantum signals. Therefore, the time-limited quantum capacity over finite bandwidth with signal duration $T$ is $\sum_k q(\lambda_k)$, and we define the quantum capacity per unit time as the \textit{time-limited quantum capacity rate}
\begin{equation}
    \label{eq:time_limited_capacity}
    Q(T) = \frac{1}{T}\sum_{k=1}^{\infty} q(\lambda_k) .
\end{equation}
Note that in the case where the input-signal duration \(T \rightarrow \infty\), a \textit{continuous-time quantum capacity}~\cite{wang_quantum_2022} for frequency multiplexing has been defined as
\begin{equation}
    \label{eq:maxCap}
    Q^{\text{max}} = \frac{1}{2 \pi}\int_{-\infty}^{\infty} d\omega \  q\left[ \eta(\omega) \right] .
\end{equation}
It can be proved that the time-limited quantum capacity rate $Q(T)$ approaches the continuous-time quantum capacity \(Q^{\text{max}}\) as $T\rightarrow \infty$ (see 
Appendix~\ref{appendix:convergence}).

\begin{figure}
    \centering
    \includegraphics[width = 0.5\textwidth]{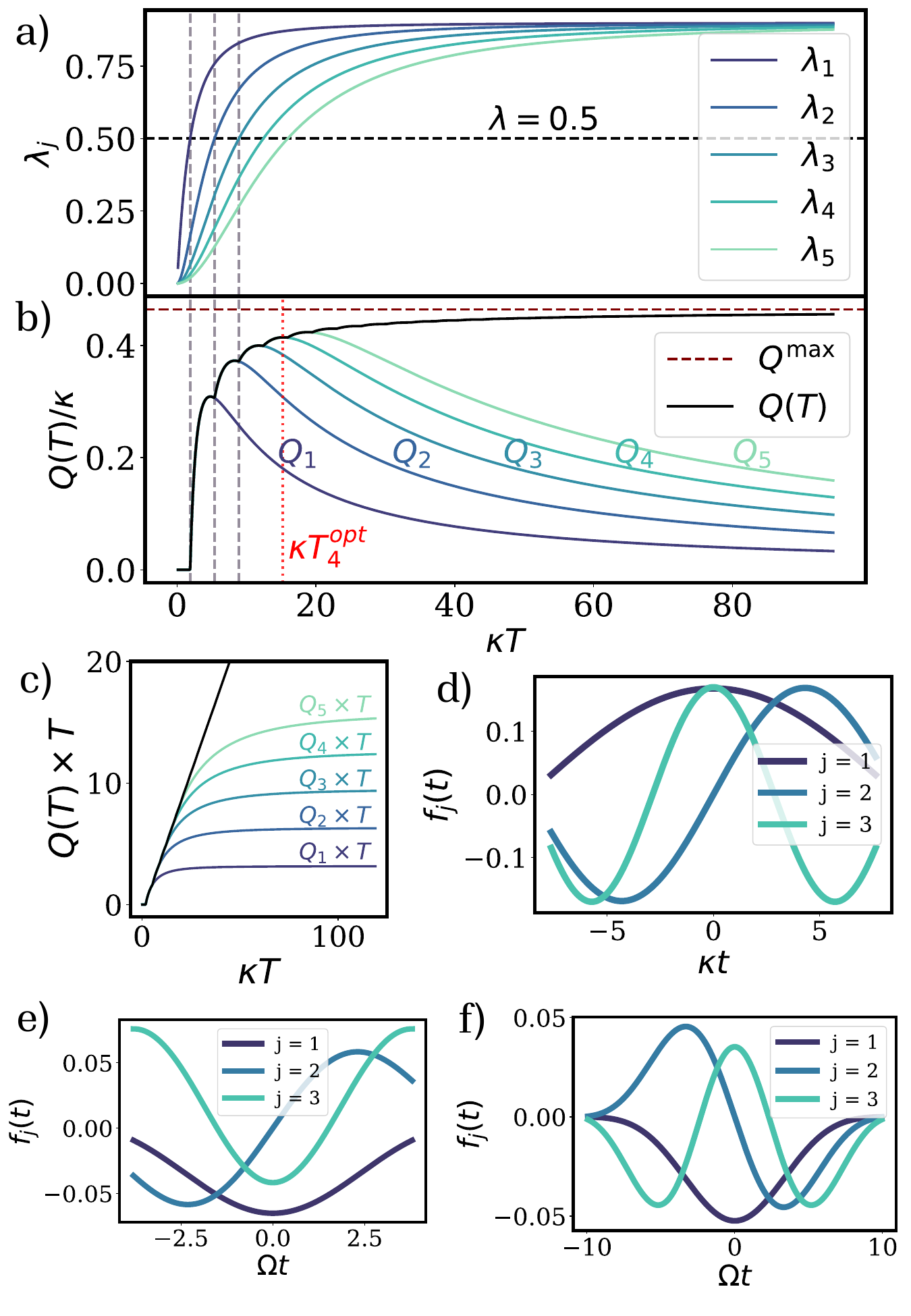}
    \caption{(a) The first five transmissivities \(\lambda_j\) of a Lorentzian transmission with \(\eta_{\text{max}} = 0.9\) for different signal duration $T$. (b) Time-limited quantum capacity rate $Q(T)$ (black), where the discontinuities correspond to transmissivities \(\lambda_j\) crossing 0.5 since a new channel opens. The colored lines represent the contributions $Q_k(T)$ of the first \(k\) channels. (c) Time-limited quantum capacity \(Q(T) \times T\) (black) and contributions $Q_k(T)\times T$ from the first $k$ channels (colored). (d) The first three optimal input modes for \(T = T^{\text{opt}}_4 \approx 15.261/\kappa\). Here $T^{\text{opt}}_4$ maximizes the communication rate for four multiplexed modes. (e-f) The first three optimal input modes of a box transmission with height $\bar{\eta}=0.85$ for (e) \(T = T_2^{\text{opt}} \approx 7.59/\Omega\) and (f) \(T =T_6^{\text{opt}} \approx 19.85/\Omega\).}
    \label{fig:lorentzian}
\end{figure}

\emph{BTL channel with Lorentzian transmission} \textemdash
We will provide two examples of BTL pure-loss channels -- with Lorentzian and box transmission spectra -- which have analytical solutions of total quantum capacity for optimized input/output modes.
For a Lorentzian transmission \(\eta(\omega) = \eta_{\text{max}}\frac{\kappa^2}{\omega^2 + \kappa^2}\), the optimal mode profiles are sine and cosine functions restricted to the time interval $[-T/2,T/2]$ up to normalization. More specifically, the eigenvalues $\lambda_n$ ($n=1,2,...$) satisfy $\tan{(C_n\kappa T)} = 2C_n / (C_n^2-1)$ where $C_n = \sqrt{(\eta_{\text{max}}-\lambda_n)/\lambda_n}$, with the eigenfunctions (see Appendix~\ref{appendix:lorentzian})
\begin{equation}
    \begin{split}
        f_n(t) =& 
        \begin{cases}
            \cos{(C_n\kappa t)} & n \text{ is odd} \\ \sin{(C_n\kappa t)} & n \text{ is even}
        \end{cases}
        , \quad 0 \leq C_n < 1 , \\
        f_n(t) =& 
        \begin{cases}
            \cos{(C_n\kappa t)} & n \text{ is even} \\ \sin{(C_n\kappa t)} & n \text{ is odd}
        \end{cases}
        , \quad C_n > 1 .
    \end{split}
\end{equation}
Notice that the eigenfunctions $f_n(t)$ are generally discontinuous at $t=\pm T/2$ (Fig.~\ref{fig:lorentzian}(d)).

We plot the optimal transmissivities $\lambda_n$ and the resulting time-limited quantum capacity $Q(T)$ (Eq.~(\ref{eq:time_limited_capacity})) for different values of $T$ in Fig.~\ref{fig:lorentzian}(a-b).
In the longtime limit, the first few transmissivities $\lambda_n$ approach $\eta_{\text{max}}$ and $Q(T)$ approaches the maximum capacity $Q^{\text{max}}$.
Interestingly, the time-limited quantum capacity rate $Q(T)$ has multiple discontinuities corresponding to one of the transmissivities crossing 0.5.
This is because the pure-loss capacity $q(\eta)$ (Eq.~(\ref{eq:pure_loss_capacity})) is discontinuous at $\eta=0.5$.
Physically, as we increase the signal duration $T$, every discontinuity in $Q(T)$ represents a non-trivial quantum channel opening up for quantum communication, since $q(\eta)=0$ for $\eta \leq 0.5$.
For the Lorentzian transmission with $\eta_{\text{max}}=1$, the $n$th quantum channel is opened up precisely at $T=(2n-1)\pi/2\kappa, n=1,2,...$ (see Appendix~\ref{appendix:lorentzian}).

In many situations, it may be practically beneficial to only use at most a small, fixed number of modes for encoding input signals. An important conclusion from our results is the existence of an optimal signal duration $T$ when only a few quantum channels are in use.
While the total quantum capacity $Q(T)\times T$ is always increasing in time (Fig.~\ref{fig:lorentzian}(c)), to saturate it we need to use all the available channels.
We define \(Q_{k}(T)=\frac{1}{T} \sum_{j=1}^{k} q\left(\lambda_j\right)\) as the contribution to the time-limited quantum capacity rate $Q(T)$ from the first \(k\) channels with highest transmissivities.
For large $T$, the optimal modes are narrow in frequency whose transmissivities approach $\eta_{\text{max}}$ (Fig.~\ref{fig:lorentzian}(a)), leading to saturation of the quantum capacity $Q_{k}(T) \times T$ (Fig.~\ref{fig:lorentzian}(c)) and reduced information rate $Q_{k}(T)$.
From Fig.~\ref{fig:lorentzian}(b), we find that the information rate $Q_{k}(T)$ for using $k$ channels is maximized at some optimal signal duration \(T_k^{\text{opt}}\).
The optimal duration \(T_k^{\text{opt}}\) and mode profiles for $k=4$ are shown in Fig.~\ref{fig:lorentzian}(d).
Intuitively, the quantum capacity is bounded by both the signal bandwidth and the channel bandwidth (see Appendix~\ref{section:bound}).

\emph{BTL channel with box transmission} \textemdash
Another example that we can solve analytically is a box transmission with $\eta(\omega)=\bar{\eta}$ for $\omega \in [-\Omega,\Omega]$ and 0 otherwise.
The optimal mode profiles are $f_n(t) \propto \psi_n(c,t), t\in [-T/2,T/2]$ with transmissivities $\lambda_n = \bar{\eta} \lambda^s_n(c)$, where $c = \Omega T /2$ and
\begin{align}
    \lambda^s_n(c) & = \frac{2c}{\pi}\left(R_{0n}(c,1)^2\right) \nonumber \\ \psi_n(c,t)  & = \frac{\sqrt{2\lambda^s_n(c)/T}}{\mu_n(c)}S_{0n}\left(c,2t/T\right) .
\end{align}
Here \(R_{n,m}(\gamma,z)\) and \(S_{n,m}(\gamma,z)\) are the radial and angular spheroidal functions of the first kind respectively with \(\mu_n(c) = \sqrt{\int_{-1}^1S_{0n}(c,t)^2 \ dt}\).
The bandlimited functions $\psi_n(c,t)$ are the prolate spheroidal functions of zeroth order (also known as \textit{Slepian functions})~\cite{moore_prolate_2004}, which have also been useful for recent applications in quantum metrology \cite{norris2018optimally,rudnicki2024spectral}.
We plot the optimal mode profiles at \(T = T_2^{\text{opt}}\) and \(T =T_6^{\text{opt}}\) in Fig.~\ref{fig:lorentzian} (e-f). Notice that for short $T$ the eigenfunctions are discontinuous at $t=\pm T/2$, while for long $T$, the Slepian functions can smoothly go to zero.
For more results on box transmission, see Appendix~\ref{appendix:box}.

\begin{figure}
    \centering
    \includegraphics[width=0.5\textwidth]{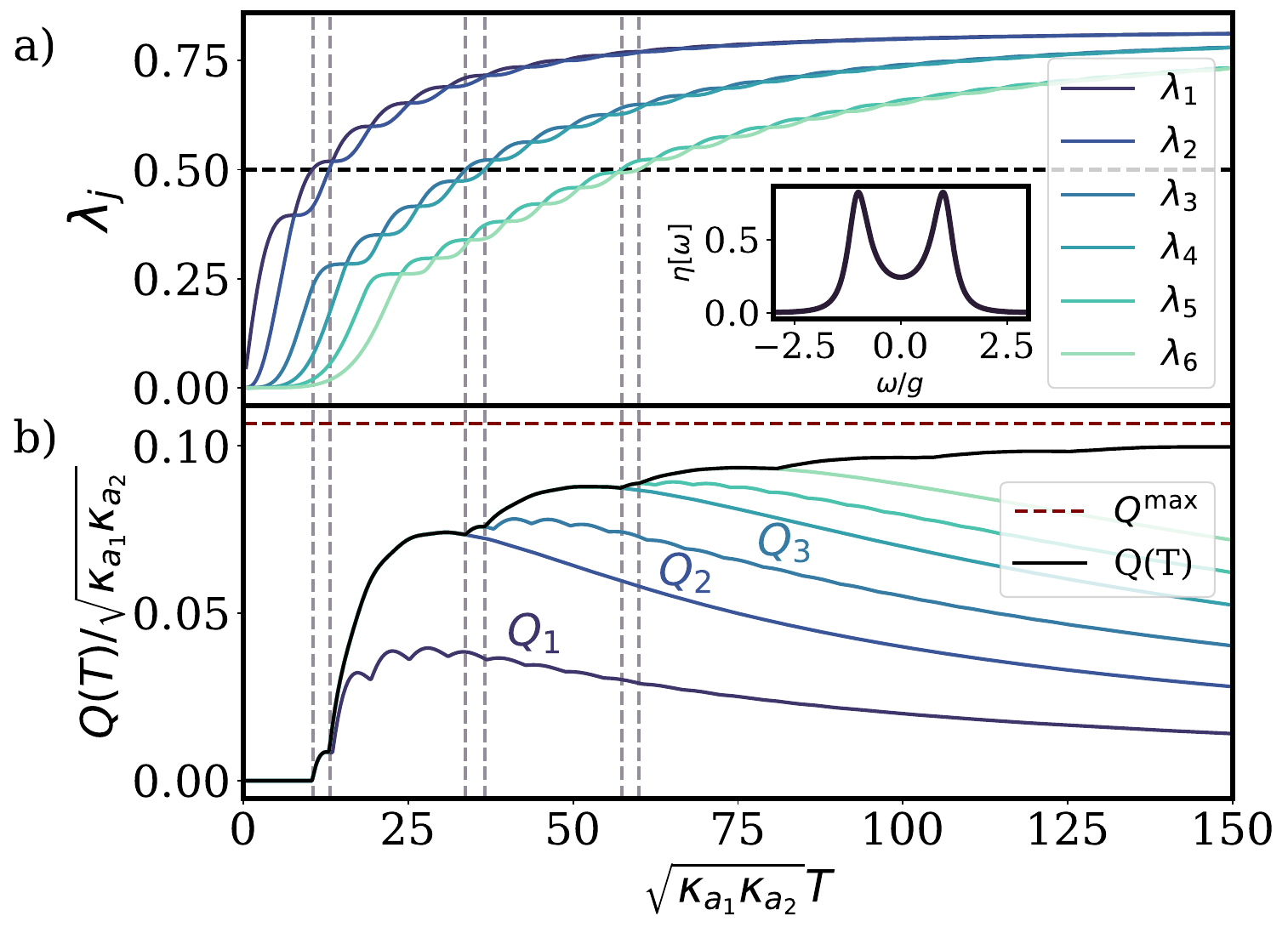}
    \caption{We consider an opto-mechanical transducer with the following parameters:
    \(\kappa_{a_1}/g = 0.5, \kappa_{a_2}/g = 7, \kappa_b/g = 0.1\).
    (a) Transmission eigenvalues for different signal duration $T$. Inset in (a) is the transmission spectrum. (b) Time-limited quantum capacity rate $Q(T)$ (black). The colored lines represent the contributions $Q_k(T)$ of the first \(k\) channels. The vertical gray lines correspond to times at which a new quantum channel opens.}
    \label{fig:om-transducer}
\end{figure}

\emph{BTL channel with generic transmission} \textemdash
While the previous two examples provided many insights, we can also apply our technique to realistic scenarios where analytic solutions may not be found.
Consider an opto-mechanical transducer which interconverts microwave and optical quantum signals $\opa_1$ and $\opa_2$ through an intermediate mechanical mode $\opb$.
The Hamiltonian in the rotating frame is $\oph = g (\opa_1 + \opa_2) \opbd + h.c.$, where the mechanical mode $\opb$ has loss rate $\kappa_b$ and modes $\opa_1,\opa_2$ couple to their input-output waveguides with external coupling rates $\kappa_{a_1},\kappa_{a_2}$; for simplicity, we also take the limit where all internal losses are negligible.
The transmission spectrum from the input of $\opa_1$ to the output of $\opa_2$ is~\cite{wang_quantum_2022}
\begin{equation}
    \eta[\omega] = \left|\frac{\sqrt{\kappa_{a_1}\kappa_{a_2}}g^2}{D[\omega]}\right|^2 , \quad 
    D[\omega] = \det \begin{pmatrix}
    \chi_{a_1}^{-1} & ig & 0 \\ 
    ig & \chi_b^{-1} & ig \\ 
    0 & ig & \chi_{a_2}^{-1}
    \end{pmatrix} ,
\end{equation}
where \(\chi_c^{-1} = i\omega + \kappa_c/2\). 

As an illustrative example, we pick strong-coupling parameters that yield a non-trivial double-peaked transmission spectrum, with each peak rising above the threshold value of 0.5 (inset in Fig.~\ref{fig:om-transducer}(a)).
There are degeneracies in the eigenvalues at various times (Fig.~\ref{fig:om-transducer}(a)), which leads to non-smooth \(Q_{k}(T)\) curves (Fig.~\ref{fig:om-transducer}(b)) with discontinuities corresponding to degenerate points in the eigenvalue spectrum. 
The times at which a new quantum channel opens up (vertical dashed lines in Fig.~\ref{fig:om-transducer}) are not evenly spaced as in the Lorentzian case.
We discuss an parameter set in Appendix~\ref{appendix:om-transducer}.

\emph{Discussion}\textemdash
In summary, we have considered quantum communication over finite bandwidth and established the optimal communication strategies when the input quantum signals are time-limited.
Specifically, the optimal input mode profiles and transmissivities can be obtained by solving an eigenvalue problem.
We found analytical solutions for Lorentzian and box transmission spectra, along with numerical solutions for various other transmissions.
Our findings reveal a general feature of sequential activation of quantum channels as the input signal duration increases, as well as the existence of optimal signal duration for scenarios where only a limited number of channels are in use.

In future works, it would be interesting to employ the optimized signal modes in near-term quantum state transfer experiments. We could extend to more general bosonic quantum channels with amplification beyond the passive Gaussian channel considered here.
Moreover, our general formalism enables the exploration of constraints other than time limitations in various wave transport phenomena.

\begin{acknowledgments}
    We acknowledge support from the ARO(W911NF-23-1-0077), ARO MURI (W911NF-21-1-0325), AFOSR MURI (FA9550-19-1-0399, FA9550-21-1-0209, FA9550-23-1-0338), DARPA (HR0011-24-9-0359, HR0011-24-9-0361), NSF (OMA-1936118, ERC-1941583, OMA-2137642, OSI-2326767, CCF-2312755), NTT Research, Packard Foundation (2020-71479), and the Marshall and Arlene Bennett Family Research Program. This material is based upon work supported by the U.S. Department of Energy, Office of Science, National Quantum Information Science Research Centers and Advanced Scientific Computing Research (ASCR) program under contract number DE-AC02-06CH11357 as part of the InterQnet quantum networking project.
\end{acknowledgments}

\appendix

\section{Proof of Theorem~\ref{theorem}}
\label{appendix:proof}
For any input mode profiles $\{f_k\}$ and the readout mode profiles $\{h_k\}$, the singular value decomposition of the scattering matrix $S=V^\dagger \Sigma U$ always exists. Our goal here is to find the optimal setting, i.e., the optimal linear combination of the input modes and the optimal readout modes, that maximizes the singular values $\{ \sigma_1,\ldots,\sigma_n \}$ of $S$ for a given set of input mode profiles $\{f_k\}$.

\subsubsection{Case 1: $\text{span} \{h_k\} = \text{span} \{g_k\}$}
In this case, each $g_k$ can be written as a linear combination of $h_k$. After the singular value decomposition we have $\ave{g_i',h_j'} \propto \delta_{ij}$ which leads to $g_i' \propto h_i'$. Therefore, after the transformation $\{g_k'\}$ becomes mutually orthogonal, which is the optimal setting in Theorem~\ref{theorem}.

\subsubsection{Case 2: $\text{span} \{h_k\} \neq \text{span} \{g_k\}$}
In this case, we can first transform $h_k$ via $V$, which does not change the singular values of $S$, such that $\{h_1,...,h_m\} \subset \text{span} \{g_k\}$ while $\{h_{m+1},...,h_n\}$ are orthogonal to $\text{span} \{g_k\}$. Now $S$ becomes
\begin{equation}
    S = \left( A_{n \times m}, 0_{n \times (n-m)} \right) ,
\end{equation}
where the eigenvalues of $A^\dagger A$ are $\mu_k = \sigma_k^2,1\leq k \leq m$ and $\sigma_{k>m}=0$.

We can find another choice of measurement basis by complementing $\{h_1,...,h_m\}$ with $\{\tilde{h}_{m+1},...,\tilde{h}_n\}$ such that $\text{span} \{h_1,...,h_m, \tilde{h}_{m+1},...,\tilde{h}_n\} = \text{span} \{g_k\}$. The leads to a modified matrix
\begin{equation}
    \tilde{S} = \left( A_{n \times m}, B_{n \times (n-m)} \right) ,
\end{equation}
where $B$ is non-zero. Notice that
\begin{equation}
    \tilde{S}^\dagger \tilde{S} = 
    \begin{pmatrix}
    A^\dagger A & A^\dagger B \\
    B^\dagger A & B^\dagger B
    \end{pmatrix} ,
\end{equation}
whose eigenvalues are $\lambda_k = \tilde{\sigma}_k^2, k = 1,...,n$ where $\tilde{\sigma}_k$ are the singular values of $\tilde{S}$.

Assuming the eigenvalues are in descending order, i.e., $\lambda_1 \geq ... \geq \lambda_n$ and $\mu_1 \geq ... \geq \mu_m$. From Cauchy’s interlacing theorem~\cite{zhang2011}, we have
\begin{equation}
    \lambda_k \geq \mu_k \geq \lambda_{n-m+k}, \quad k = 1,...,m .
\end{equation}
Therefore $\tilde{S}$ has larger singular values than $S$ which completes our proof.

\section{Derivation of Eq.~(\ref{eigProblem})}
\label{appendix:derivation}
From Theorem~\ref{theorem}, The optimal profiles $\{f_i\}$ satisfy $\ave{g_i,g_j} = \lambda_i \delta_{ij}$. Since
\begin{equation}
    \begin{split}
        f_i(\omega) =& \frac{1}{\sqrt{2\pi}} \int_{-\infty}^{\infty} f_i(t) e^{-i\omega t} \text{d} t \\
        =& \frac{1}{\sqrt{2\pi}} \int_{-T/2}^{T/2} f_i(t) e^{-i\omega t} \text{d} t ,
    \end{split}
\end{equation}
we have
\begin{equation}
    \begin{split}
        & \ave{g_i,g_j} = \lambda_i \delta_{ij} \\
        =& \int_{-\infty}^{\infty} \eta(\omega) f_i^*(\omega) f_j(\omega) \text{d} \omega \\
        =& \frac{1}{2\pi} \int_{-\infty}^{\infty} \eta(\omega) \text{d} \omega \iint_{-T/2}^{T/2} f_i^*(t) f_j(t') e^{i\omega (t-t')} \text{d} t \text{d} t' \\
        =& \iint_{-T/2}^{T/2} f_i^*(t) M(t,t') f_j (t') \text{d} t \text{d} t' ,
    \end{split}
\end{equation}
where
\begin{equation}
    M(t,t') = \frac{1}{2\pi} \int_{-\infty}^{\infty} \eta(\omega) e^{i\omega (t-t')} \text{d} \omega = \tilde{\eta} (t-t') 
\end{equation}
is a Hermitian linear operator, i.e., $M^*(t',t)=M(t,t')$.
Therefore, solving $\{f_i\}$ is equivalent to solving the eigenfunction Eq.~(\ref{eigProblem}).

\section{Convergence of \(Q(T)\) to \(Q^{\text{max}}\)}
\label{appendix:convergence}
We can rewrite the eigenvalue problem in Eq.~(\ref{eigProblem}) as 
\begin{equation}
\label{eq:appEigEq}
    \underbrace{\int_{-\infty}^{\infty} dt'\ \tilde{\eta}(t-t') D_{T/2}(t') f_k(t')}_{\equiv F_T(t)} = \lambda_k f_k(t)
\end{equation}
where we define the window function
\begin{equation}
    D_{\tau}(t) = \begin{cases} 1 & |t|\leq \tau \\ 0 & |t| > \tau \end{cases}
\end{equation}

We would like to write Eq.~(\ref{eq:appEigEq}) in frequency space for large \(T\). However, it is only defined for \(t \in [-T/2,T/2]\), and as such, we can write
\begin{equation}
    \lim_{T\to \infty} \int_{-T/2}^{T/2} dt\ e^{-i\omega t} F_T(t) =  \lambda_k \lim_{T\to \infty} \int_{-T/2}^{T/2} dt \ e^{-i\omega t} f_k(t)
\end{equation}
In this limit, notice that \(D_{T/2}(t') f(t') \to f(t')\), since the window becomes much wider than \(f(t)\). Then, the integrals over \(t\) become well-defined Fourier transforms, and in the limit \(T \to \infty\) we have
\begin{equation}
    \mathcal{F}\left[ \int_{-\infty}^{\infty} dt'\ \tilde{\eta}(t-t')  f_k(t')\right] = \lambda_k f_k(\omega) 
\end{equation}
where \(\mathcal{F}[]\) represents the Fourier transform. Now, we can apply convolution theorem on the left hand side, which yields
\begin{equation}
    \eta(\omega) f_k(\omega) = \lambda_k f_k(\omega).
\end{equation}

This is trivially solved by the eigenfunctions \(f_k(\omega) = \delta(\omega - k \Delta \omega)\), for some sampling rate \(\Delta \omega = 2\pi/T\), yielding eigenvalues \(\lambda_k \to \eta(k \Delta\omega)\). This is exactly the zero linewidth case in which \(Q^{\text{max}}\) is defined, where each frequency is its own mode. Then, we can write 
\begin{equation}
    \lim_{T \to \infty} Q(T) = \lim_{T \to \infty} \frac{1}{2\pi} \Delta \omega \sum_k q_1\left(\eta(k \Delta\omega)\right) = Q^{\text{max}}.
\end{equation}

\section{Analytic solution for Lorentzian transmission}
\label{appendix:lorentzian}
For the special scenario where the transmission spectrum is a Lorentzian, we can find analytical solutions of the eigenfunctions, and thus the optimal basis functions. For a cavity symmetrically coupled to two waveguides, we have
\begin{align}
    t(\omega)  = -\frac{\kappa}{i\omega + \kappa} \quad & \Rightarrow \quad \eta(\omega) = \frac{\kappa^2}{\omega^2 + \kappa^2} \nonumber \\ 
   &  \Rightarrow \quad \tilde{\eta} (t) = \frac{\kappa}{2} e^{-\kappa |t|} 
\end{align}
The eigenequation becomes
\begin{equation}
    \begin{split}
        \lambda f(t) =& \frac{\kappa}{2} \int_{-T/2}^{T/2} e^{-\kappa |t-t'|} f(t') d t' \\
        =& \frac{\kappa}{2} \int_{-T/2}^{t} e^{-\kappa (t-t')} f(t') d t' \\
        & + \frac{\kappa}{2} \int_{t}^{T/2} e^{-\kappa (t'-t)} f(t') d t' \\
    \end{split}
\end{equation}

\subsection{Eigenfunctions and eigenvalues}
For the integral equation above, take the first and second derivatives gives
\begin{equation}
    \begin{split}
        \lambda f'(t) =& \frac{\kappa}{2} (-\kappa) \int_{-T/2}^{t} e^{-\kappa (t-t')} f(t') d t'  \\ & + \frac{\kappa}{2} \kappa \int_{t}^{T/2} e^{-\kappa (t'-t)} f(t') d t' \\
        \lambda f''(t) =& \frac{\kappa}{2} (-\kappa)^2 \int_{-T/2}^{t} e^{-\kappa (t-t')} f(t') d t' \\ & + \frac{\kappa}{2} \kappa^2 \int_{t}^{T/2} e^{-\kappa (t'-t)} f(t') d t' - \kappa^2 f(t)\\ =& \kappa^2 (\lambda - 1) f(t) 
    \end{split}
\end{equation}
Therefore the eigenfunctions take the form
\begin{equation}
    f(t) = A e^{i C \kappa t} + B e^{-i C \kappa t}, \quad t \in [-T/2,T/2]
\end{equation}
where
\begin{equation}
    C=\sqrt{\frac{1-\lambda}{\lambda}} \geq 0 .
\end{equation}

The original eigenequation now becomes
\begin{equation}
    \begin{split}
        & A \left[ (-1+iC) e^{-\frac{1}{2} (1+iC)\kappa T} e^{-\kappa t} - (1+iC) e^{-\frac{1}{2} (1-iC)\kappa T} e^{\kappa t} \right] \\
        +& B \left[ (-1+iC) e^{-\frac{1}{2} (1+iC)\kappa T} e^{\kappa t} - (1+iC) e^{-\frac{1}{2} (1-iC)\kappa T} e^{-\kappa t} \right] \\
        & = 0 .
    \end{split}
\end{equation}
This gives $A = \pm B$ and
\begin{equation}
    e^{iC\kappa T} = \pm \frac{C^2-1 + 2Ci}{C^2+1} .
\end{equation}
Equivalently, $C$ satisfies
\begin{equation}
    \tan C \kappa T = \frac{2C}{C^2-1} ,
\end{equation}
and the parity of the eigenfunction depends on the value of $C$.

We can identify the intervals within which there is exactly one solution of $C$, and the corresponding eigenfunction. For $0 \leq C < 1$, the intervals are
\begin{equation}
    C \in \left( \frac{(2n-1)\pi}{2\kappa T}, \frac{n\pi}{\kappa T} \right), \quad \left \{ 
    \begin{aligned}
    f(t) = \cos(C \kappa t) & \quad 2 \nmid n \\
    f(t) = \sin(C \kappa t) & \quad 2 \mid n
    \end{aligned}
    \right .
\end{equation}
with \(1 \leq n < \frac{\kappa T}{\pi} + \frac{1}{2}\). 
For $C>1$, the intervals are
\begin{equation}
    C \in \left( \frac{n\pi}{\kappa T}, \frac{(2n+1)\pi}{2\kappa T} \right), \quad \left \{ 
    \begin{aligned}
    f(t) = \cos(C \kappa t) & \quad 2 \mid n \\
    f(t) = \sin(C \kappa t) & \quad 2 \nmid n
    \end{aligned}
    \right . 
\end{equation}
with \(n > \frac{\kappa T}{\pi} - \frac{1}{2}\).

\subsection{Opening up of a new quantum channel}
The threshold for transmitting quantum information is $\lambda = 0.5$. Notice that for $\lambda > 0.5$, we have $C<1$. Therefore as we increase $T$, the $n$th quantum channel is opened up precisely at
\begin{equation}
    T = \frac{(2n-1)\pi}{2\kappa} , \quad n = 1,2,...
\end{equation}

\subsection{Convergence to the total quantum capacity} 
When $T\rightarrow \infty$, the contributions from all eigenfunctions in the time domain should converge to the total quantum capacity calculated in the frequency domain. In the limit of $T\rightarrow \infty$, the solutions where $C<1$ are $C_n \approx n\pi/\kappa T$ where $1 \leq n < \kappa T/\pi$. Therefore the total quantum capacity over time $T$ is
\begin{equation}
    \begin{split}
        Q(T) =& \sum_{n=1}^{\kappa T/\pi} \log \frac{\lambda_n}{1-\lambda_n} = -2 \sum_{n=1}^{\kappa T/\pi} \log C_n \\
        \approx & 2 \sum_{n=1}^{\kappa T/\pi} \log \frac{\kappa T}{n \pi} \approx 2 \int_{1}^{\kappa T/\pi} \log \frac{\kappa T}{n \pi} dn \\
        \stackrel{\omega= n\pi /T}{=} & \frac{2T}{\pi} \int_{\pi/T}^{\kappa} \log \frac{\kappa}{\omega} d\omega \\
        \approx & \frac{2T}{\pi} \int_{0}^{\kappa} \log \frac{\kappa}{\omega} d\omega .
    \end{split}
\end{equation}
In the frequency domain, the total quantum capacity per unit time is
\begin{equation}
\begin{split}
    Q & = \frac{1}{2\pi} \int_{-\infty}^{\infty} \max\left(0, \log \frac{\eta(\omega)}{1-\eta(\omega)} \right) d\omega \\ & = \frac{1}{2\pi} \int_{-\infty}^{\infty} \max\left(0, \log \frac{\kappa^2}{\omega^2} \right) d\omega \\ & = \frac{2}{\pi} \int_{0}^{\kappa} \log \frac{\kappa}{\omega} d\omega .
\end{split}
\end{equation}
Therefore, the total quantum capacities calculated in the time and frequency domains indeed agree with each other.

\section{Results on box transmission}
\label{appendix:box}
We can rewrite Eq.~(\ref{eq:appEigEq}) in the Fourier domain using convolution theorem as
\begin{equation}
\label{eigProblemFreq}
    \eta(\omega) \int d\omega' \frac{\sin(\omega'T/2)}{\omega'} f_k(\omega-\omega') = \lambda_k f_k(\omega). 
\end{equation}

For a box transmission of height \(\bar{\eta}\) and width \(2\Omega\) as described in the main text,  Eq.~(\ref{eigProblemFreq})) becomes 
\begin{equation}
     \bar{\eta} \int_{-\Omega}^{\Omega} d\omega' \frac{\sin(\omega'T/2)}{\omega'} \tilde{f}_k(\omega-\omega') = \lambda_k f_k(\omega). 
\end{equation}

For a constant \(\bar{\eta}\), this is exactly the Slepian eigenvalue equation in~\cite{moore_prolate_2004}. This yields the Slepian solution discussed in the main text. In Fig.~\ref{fig:boxAppendix}, we plot the \(Q(T)\) and \(Q(T)\times T\) curves for the box transmission, as we do in Fig.~\ref{fig:lorentzian}(b-c). Notice the offset of the \(Q^{\text{max}}\times T\) curve from \(Q(T)\times T\) in Fig.~\ref{fig:boxAppendix}(b) - this is a feature of Riemann error due to discretization in our numerical calculations. 
 
\begin{figure}
    \centering
    \includegraphics[width=0.5\textwidth]{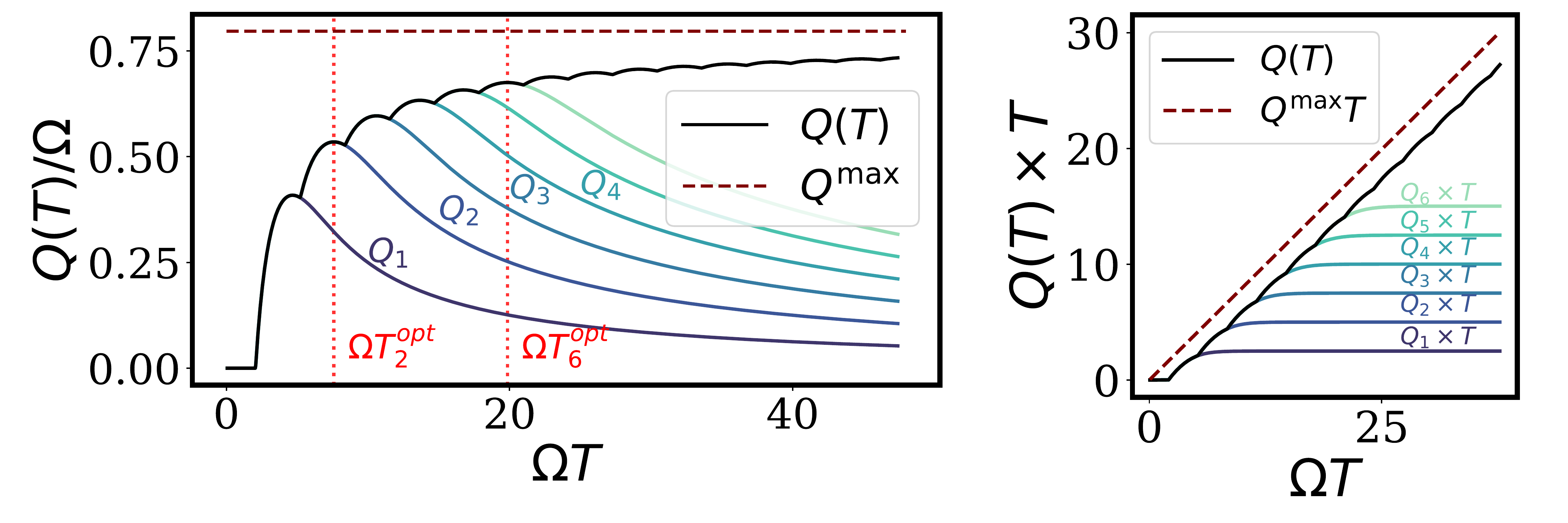}
    \caption{a) \(Q(T)\) curve (black) and \(Q_k(T)\) curves (colored) for the a box transmission with height \(\bar{\eta} = 0.85\) as described in the main text. Vertical red lines indicate the optimal times \(T_2^{\text{opt}}\) and \(T_6^{\text{opt}}\) with corresponding input modes plotted in Fig.~\ref{fig:lorentzian}(e-f). b) Corresponding \(Q(T) \times T\) curve (black) and \(Q_k(T)\times T\) (colored) curves. We also plot \(Q^{\text{max}}T\) (maroon dashed line).}
    \label{fig:boxAppendix}
\end{figure}

\section{Lower time bound for non-trivial capacity}
\label{section:bound}
We discuss a general upper bound on the highest possible transmission \(\lambda_1\) for some given time \(T\), which in turn allows us to provide a lower bound for the time at which \(Q(T)\) is first non-zero. We know from Theorem~\ref{theorem} that the maximum transmissivity \(\lambda_1\) is given by 
\begin{equation}
    \lambda_1 = |\langle \tau^*(\omega) f_1(\omega), \tau^*(\omega)f_1(\omega)\rangle|^2 = \mathbb{E}(\eta(\omega))
\end{equation}
where we assume \(f_1(\omega)\) is the input mode corresponding to the largest transmissivity for time \(T\). We can write \(\lambda_1\) as an expectation value of \(\eta(\omega)\) in a probability distribution \(P(\omega) = |f_1(\omega)|^2\) since \(f_1(\omega)\) is normalized. Then, assuming the maximum value of the spectrum \(\eta(\omega)\) lies in some range \(\omega \in [-\Omega,\Omega]\), we can trivially write
\begin{equation}
    \lambda_1 \leq \eta_{\text{max}}\int_{|\omega|<\Omega} P(\omega)d\omega + \int_{|\omega| > \Omega} \eta(\omega) P(\omega) d\omega
\end{equation}
Now, Chebyshev's inequality \cite{lin_probability_2010} helps bound tails of probability distributions as 
\begin{equation}
    \int_{|\omega| > l\sigma_{\omega}} P(\omega) \leq \frac{1}{l^2}
\end{equation}
where \(\sigma_{\omega}^2\) is the variance of the probability distribution \(P(\omega)\). Picking \(\Omega = l \sigma_{\omega}\), we can bound
\begin{align}
    \int_{|\omega| > \Omega} \eta(\omega) P(\omega) d\omega &  \leq \left(\text{max}_{|\omega| > \Omega} \eta(\omega)\right) \int_{|\omega| >\Omega} P(\omega) d\omega  \nonumber \\ & \leq \left(\text{max}_{|\omega| > \Omega} \eta(\omega)\right) \frac{\sigma_\omega^2}{\Omega^2}
\end{align}

We know that for a time-limited input function defined in the interval \(t \in [-T/2,T/2]\), the maximum standard deviation is \(\sigma_t = T/2\). Then, by the uncertainty principle for signal processing, \(\sigma_t \sigma_{\omega} \geq \frac{1}{2\pi}\), we get that \( \sigma_{\omega}^2 \geq 4\pi^2/T^2\). If we then also assume \(\eta(\omega)\) is monotonically decreasing for \(|\omega|>\Omega\), we can write 
\begin{equation}
\label{eq:bound}
    \lambda_1 \leq  \text{min}_{\Omega} \left[ \ \eta_{\text{max}}\int_{|\omega|<\Omega} P(\omega)d\omega+\eta(\Omega) \frac{4\pi^2}{\Omega^2T^2}\right].
\end{equation}

Note that there is a freedom of choice in \(\Omega\) which we can minimize to get the tightest bound. This bound gives us valuable intuition for the capacities of certain input modes. Critically, the ratio of the variances of the probability distribution \(|f_k(\omega)|^2\) and transmission \(\eta(\omega)\) determines the transmission \(\lambda_k\). 

The total time \(T\) determines a minimum variance on the probability distribution \(\sigma_{\omega}^2\). If we have \(\Omega T \gg 1\), \(\sigma_{\omega}\) is small compared to the spread of \(\eta(\omega)\), and the input mode approaches a delta function. In this limit, we can find some \(\Omega\) that will have \(\int_{|\omega|<\Omega} P[\omega]d\omega \to 1\), and the second term on the right in Eq.~(\ref{eq:bound}) will go to zero. Then, we see \(\lambda_1 \to \eta_{\text{max}}\), which is what we expect in the \(T \to \infty\) limit since each frequency becomes an independent mode and the quantum capacity approaches the limiting value \(Q^{\text{max}}\). 

In the limit \(\Omega T \ll 1\), we have that \(\sigma_{\omega}^2\) is very large as compared to the spread of \(\eta(\omega)\). Then, we can always find an \(\Omega\) such that \(\eta(\Omega) \ll 1\) while \(\int_{|\omega|<\Omega} P(\omega)d\omega < 1/2\), which would bound \(\lambda_1 < 1/2\), and we would see no quantum capacity. 

\begin{figure}
    \centering
    \includegraphics[width=0.5\textwidth]{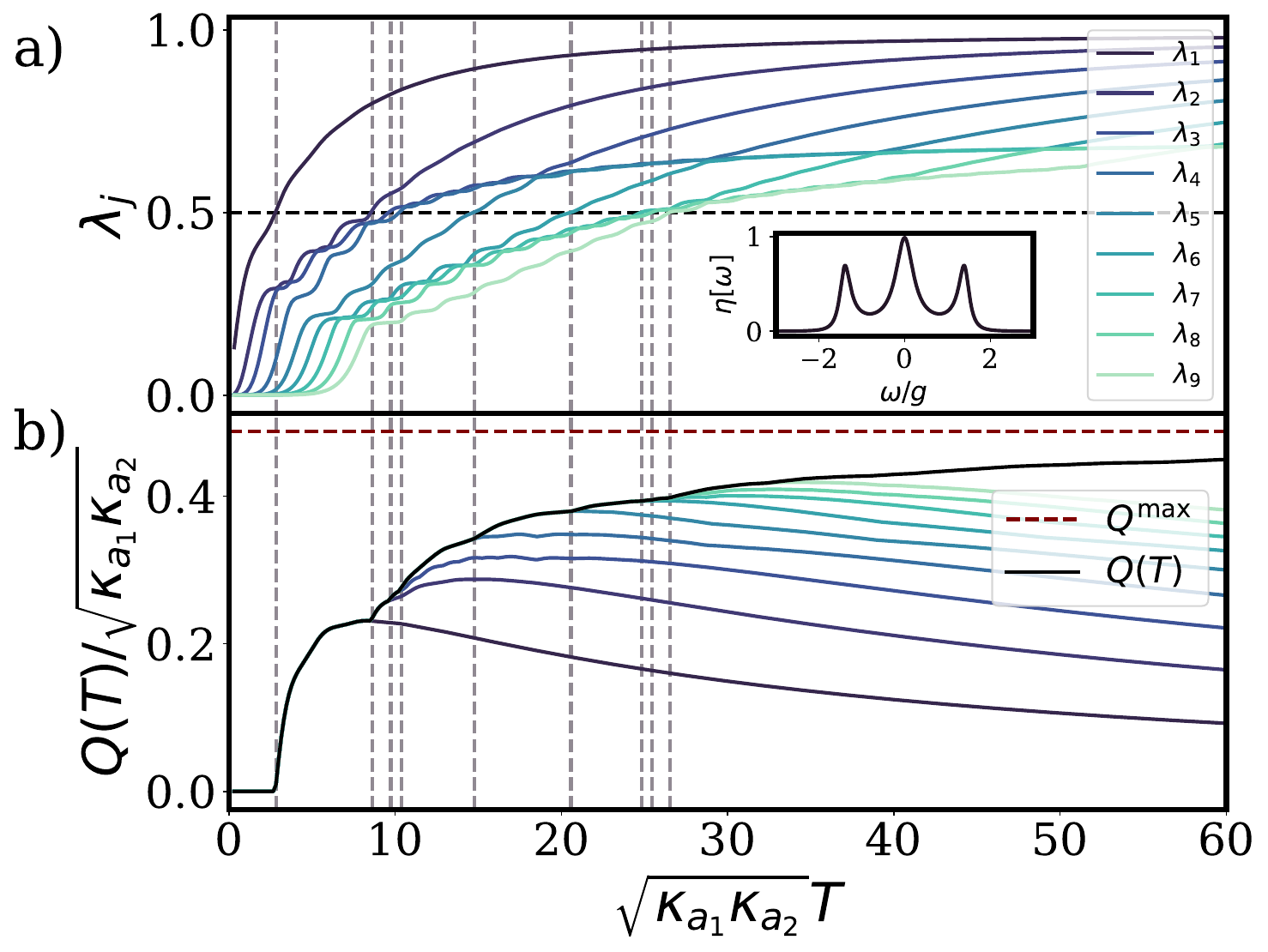}
    \caption{Eigenvalues and \(Q(T)\) for an opto-mechanical transducer for Parameter Set 2. The corresponding transmission spectrum can be seen as an inset in the top panel.}
    \label{fig:PS3}
\end{figure}

\section{Further parameter sets for opto-mechanical transducer}
\label{appendix:om-transducer}
We consider an additional set of parameters for the opto-mechanical transducer system discussed in the main text (see Fig.~\ref{fig:om-transducer}). In particular, this parameter set is given by the parameters \(\kappa_{a_1}/g = \kappa_{a_1}/g  = 0.5\) and \(\kappa_b/g = 0.1\). The transmission has three peaks, typical of such a spectrum. Notably, the \(Q_{k}(T)\) are far more complicated since there is a large number of degeneracies in the eigenvalues \(\lambda_j\) (see Fig.~\ref{fig:PS3}). This also leads to the times for new channels to open to be non-trivial and not evenly spaced as well. 

\bibliography{paper-citations-1,paper-citations-2}

\end{document}